\newtheorem{lemma}{Lemma}
\newcommand{\bra}[1]{\left\langle{#1}\right\vert}
\newcommand{\ket}[1]{\left\vert{#1}\right\rangle}
\newcommand{\ketbra}[2]{\left\vert{#1}\right\rangle\!\left\langle{#2}\right\vert}
\newcommand{\braket}[2]{\left\langle{#1}\vert{#2}\right\rangle}
\newcommand{\ovlap}[2]{\left \lvert\braket{#1}{#2}\right\lvert^2}
\newcommand{\ovlapr}[2]{r_{#1 #2}}
\newcommand{\tr}{\mathrm{tr}}
\begin{document}

\title{Quantum and classical bounds for two-state overlaps}

\author{Ernesto F. Galv\~{a}o}
\affiliation{Instituto de F\'{i}sica, Universidade Federal Fluminense, Av. Gal. Milton Tavares de Souza s/n, Niter\'oi, RJ, 24210-340, Brazil}
\affiliation{International Iberian Nanotechnology Laboratory (INL), Av. Mestre Jos\'{e} Veiga, 4715-330 Braga, Portugal}

\author{Daniel J. Brod}
\affiliation{Instituto de F\'{i}sica, Universidade Federal Fluminense, Av. Gal. Milton Tavares de Souza s/n, Niter\'oi, RJ, 24210-340, Brazil}

\begin{abstract}
Suppose we have $N$ quantum systems in unknown states $\ket{\psi_i}$, but know the value of some pairwise overlaps $\left| \braket{\psi_k}{\psi_l}\right|^2$. What can we say about the values of the unknown overlaps? We provide a complete answer to this problem for three pure states and two given overlaps, and a way to obtain bounds for the general case. We discuss how the answer contrasts from that of a classical model featuring only coherence-free, diagonal states, and describe three applications: basis-independent coherence witnesses, dimension witnesses, and characterisation of multi-photon indistinguishability. 
\end{abstract}

\maketitle

\section{Introduction}
Equality is an example of an equivalence relation. In particular, it is transitive, so $\forall A, \forall B, \forall C$, ($A=B$ $\wedge$ $A=C) \implies B=C$. This allows us to infer the equality of two objects without ever directly comparing them. In this work, we will be interested in inferring as much as possible about two-state comparisons that were never made, based on information on two-state comparisons that were actually performed. There are various notions of quantum state comparison \cite{Audenaert14}. Here we focus on the two-state overlap (or linear fidelity) $r_{AB}=\tr(\rho_A \rho_B)$, which for pure states reduces to $r_{AB}= \left| \braket{A}{B} \right|^2$. Due to the probabilistic nature of quantum theory, a natural question is: given the values of the pairwise overlaps between some pairs of states in a set, what can we establish about the unknown pairwise overlaps in the set?

Here we provide a complete answer to this problem for three pure states, and show how to obtain bounds for unknown overlaps in the general scenario.  Our results describe a fundamental aspect of the geometry of allowed quantum states, by characterizing the constraints satisfied by the degrees of similarity between pairs of states in a set. We will see that our results allow for experimentally simpler designs for the characterization of multiphoton indistinguishability, a resource for new photonic quantum technologies \cite{ObrienFV09}.

If instead of general quantum states, we consider some subset of states, the bounds on overlaps can be more restrictive. We find the bounds satisfied by two different subsets of quantum states. First, we characterize the bounds that apply to three classical states, defined as coherence-free states which are simultaneously diagonal in some single reference basis. We also consider the bounds that result if we assume that the three states span only a 2-dimensional Hilbert space, as will be the case, e.g., when we have three states of single qubits. By contrasting these more restrictive bounds with the general bounds on overlaps we obtained, we present two other applications of our results: basis-independent coherence witnesses \cite{StreltsovAP17}, and Hilbert-space dimension witnesses \cite{BrunnerPAGMS08}.

Our paper is organized as follows. In Section \ref{sec:3qs} we obtain the bounds that apply to a scenario with three quantum states, and show how to generalize this to more general scenarios with any number of states. In Section \ref{sec:cbounds} we obtain the bounds that apply to classical, coherence-free states which are diagonal in some common basis. In Section \ref{sec:applic} we describe three applications of our results: basis-independent coherence witnesses, characterization of multiphoton indistinguishability, and Hilbert space dimension witnesses. We offer some concluding remarks in Section \ref{sec:conclusion}, with some of the more technical calculations presented in the Appendices.

\section{Overlap bounds for quantum states}
\label{sec:3qs}
Let us start by formalising the general problem we consider and introducing notation. We refer to the \emph{overlap} between two quantum states $\rho_i$ and $\rho_j$ as $r_{ij} = \tr (\rho_i \rho_j)$. Unless stated otherwise, in this paper we assume that states are pure, in which case the overlap reduces to
\begin{equation}
r_{ij}=\left|\braket{\psi_i}{\psi_j}\right|^2.
\end{equation}

Let $G$ be an $N$-vertex weighted connected graph. Let vertices represent unknown quantum states $\ket{\psi_i}$, with edges representing known two-system overlaps with weights given by  $r_{ij}$ (see \cref{fig:3chain}). The problem is to obtain tight bounds for the possible values for the unknown overlaps or, more generally, to characterize the geometry of the space of allowed pairwise overlaps.

\begin{figure}[b]
    \centering
    \includegraphics{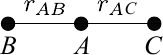}
    \caption{$P_3$ graph. Nodes represent pure states, edges represent known pairwise overlaps $r_{AB}=\left|\braket{A}{B}\right|^2$ and $r_{AC}=\left|\braket{A}{C}\right|^2$.}
    \label{fig:3chain}
\end{figure}

One way to measure the overlap between the states of two quantum systems $A$ and $B$, of arbitrary dimensions, without obtaining any other information, is to use the well-known SWAP test \cite{BuhrmanCWdW01, MontanarodW16}. The SWAP test, represented in \cref{fig:swap2}(a), consists of a projection onto the symmetric subspace. As long as $A$ and $B$ are in a tensor product state, the probability of obtaining measurement outcome $0$ from the top auxiliary qubit in \cref{fig:swap2}(a) is $p(0)=( 1+r_{AB})/2$, which thus allows us to estimate the overlap $r_{AB} = \tr(\rho_A \rho_B)$.
If the two systems are qubits, \cref{fig:swap2}(b) shows a simplified version of the SWAP test, consisting of a measurement onto the Bell basis. In this case, the final measurements yield $p(11)=(1-r_{AB})/2$. A similar simplified implementation of $n$-qubit SWAP tests is presented in \cite{Garcia-EscartinCP13}. The SWAP test has applications in quantum fingerprinting \cite{BuhrmanCWdW01}, quantum machine learning \cite{BiamonteWPRWL17}, and in the characterisation of monogamy relations between fermionic and bosonic behaviors \cite{KarczewskiKK18}.

\begin{figure}[t]
    \centering
    \includegraphics[width=0.40\textwidth]{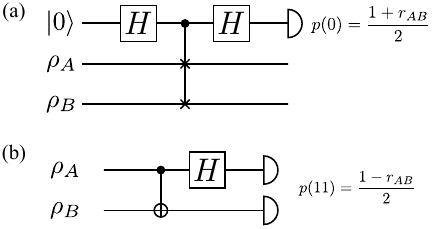}
    \caption{(a) SWAP test, allowing to estimate two-state overlap $r_{AB}=\tr(\rho_A \rho_B)$ of states of any dimensionality. (b) Simplified SWAP test for two states of one qubit. }
    \label{fig:swap2}
\end{figure}

\subsection{Bounds for overlaps of 3 states} \label{sec:b3states}

We begin by considering this problem for the smallest case of $N=3$ pure states. Later we will discuss how to leverage the solution of this case to obtain non-trivial bounds for scenarios with $N>3$ pure states.

The only non-complete connected graph with 3 vertices is $P_3$, the 3-vertex chain graph (see Fig.\ \ref{fig:3chain}). Let us call the two vertices at the ends of the chain $B$ and $C$, with $A$ being the degree-2 vertex. Suppose we know the values of 
\begin{align*}
r_{AB} &= \left| \braket{A}{B}\right|^2 ,\\
r_{AC} &= \left| \braket{A}{C}\right|^2.
\end{align*}
Our goal is to obtain non-trivial bounds for the unmeasured overlap $r_{BC}= \left|\braket{B}{C}\right|^2$, by varying over $\{ \ket{A}, \ket{B}, \ket{C}\}$ with fixed $r_{AB}$ and $r_{AC}$. 

Any three pure states span a Hilbert space which is at most 3-dimensional. Without loss of generality we choose a basis $\{ \ket{0}, \ket{1}, \ket{2}\}$  such that:
\begin{align*}
\ket{A}=&\ket{0},\\
\ket{B}=&\cos\beta\ket{0}+\sin\beta\ket{1},\\
\ket{C}=&\cos\gamma\ket{0}+\sin\gamma\sin\alpha e^{i \phi}\ket{1}\nonumber \\ &+\sin\gamma\cos\alpha\ket{2}.
\end{align*}
with $\alpha, \beta, \gamma \in [0, \pi/2]$ and $\phi \in [0,2\pi)$. This is possible by choosing the arbitrary global phases of each state, and applying diagonal unitaries in the computational basis to eliminate one relative phase from both $\ket{B}$ and $\ket{C}$.

Now we can extremise
\begin{equation*}
r_{BC}=\left|\cos \gamma\cos\beta+e^{i \phi}\sin\gamma\sin\beta\sin\alpha \right|^2 ,
\end{equation*}
subject to the constraints of fixed $r_{AB}=\cos^2 \beta$ and $r_{AC}=\cos^2\gamma$. This is done in Appendix \ref{apx:overlaps}, and the result is as follows. To write these bounds explicitly in terms of the given overlaps we first define the shorthand
\begin{equation*}
r_{\pm} := \left( \sqrt{r_{AB}r_{AC}}\pm\sqrt{(1-r_{AB})(1-r_{AC})}\right)^2
\end{equation*}
in terms of which we can write
\begin{equation}
r_{BC} \le r_{+} \label{eq:ub}
\end{equation}
and
\begin{equation}
  r_{BC}\ge \begin{cases}
    r_{-}, & \text{if $r_{AB}+r_{AC} > 1$,}\\
    0 & \text{otherwise}.
  \end{cases}\label{eq:lb}
\end{equation}
In all cases the bounds are tight, in the sense that there always exist triples of pure states for which they are attained. 

We stress that the bounds (\ref{eq:ub})-(\ref{eq:lb}) apply to three pure quantum states of \emph{arbitrary} dimension. In Appendix \ref{apx:qubits} we prove that they also hold for three \emph{mixed} single-qubit states. We conjecture that the bound holds in full generality, i.e.\  for mixed states of any dimension.

Besides bounding one overlap in terms of the others, we can also view our results from the perspective of the geometry of the set of allowed overlap triples. More specifically, if we consider $\vec{r}=(r_{AB}, r_{AC}, r_{BC})$ as a real triplet, then the set  $Q$ of allowed values of $\vec{r}$ compatible with three pure quantum states is characterized by the conditions that $r_{ij} \in [0,1]$, together with the inequality
\begin{equation}
   r_{AB}+r_{BC}+r_{AC}-2\sqrt{r_{AB}r_{BC}r_{AC}}\leq1, \label{eq:implic3s} 
\end{equation}
which is obtained by rearranging the terms in our upper and lower bounds. The set $Q$ is pictured in \cref{fig:ditqudit}(a).

\begin{figure*}[ht]
    \centering
    \includegraphics[width=0.80\textwidth]{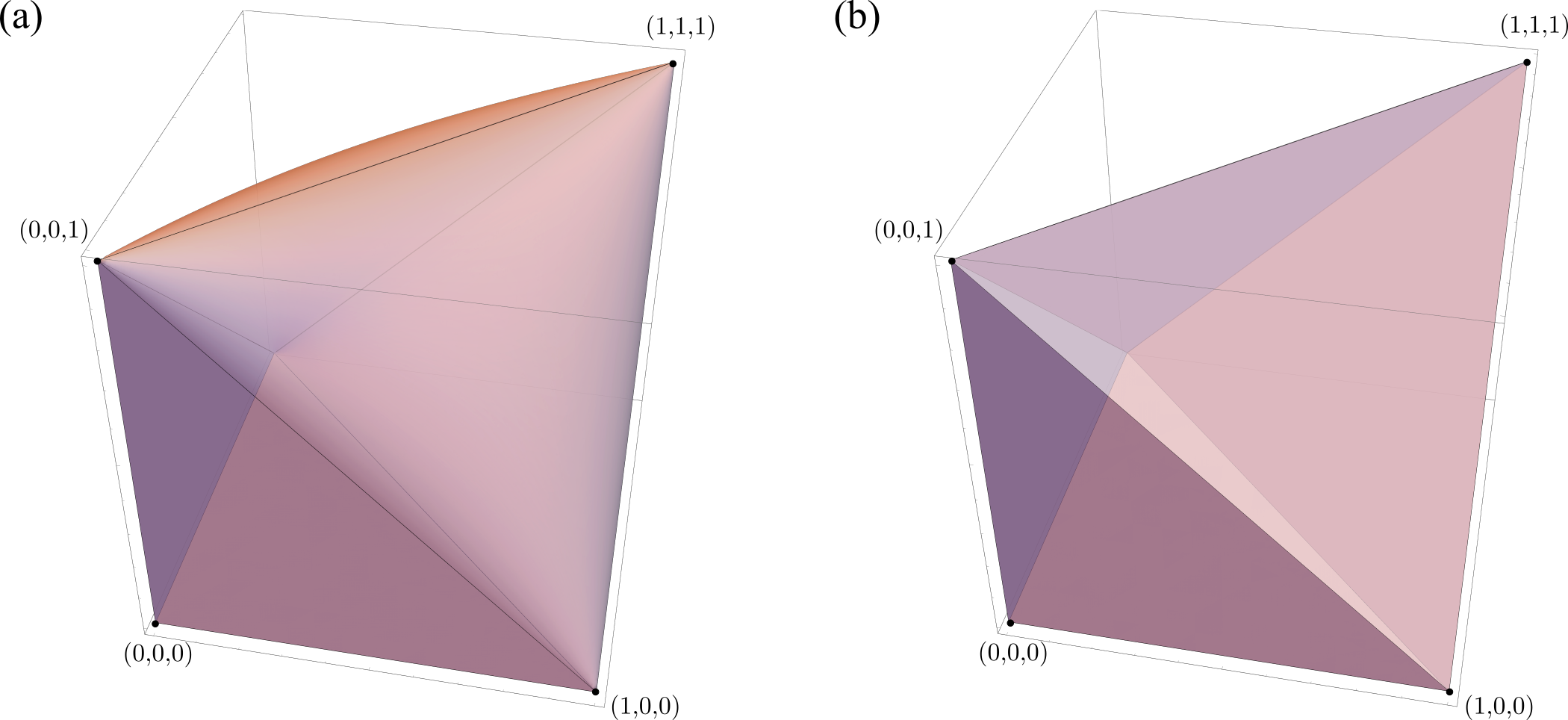}
    \caption{\textbf{Classical and quantum bounds for 3 overlaps of 3 states}. These are represented in the space of overlap triples $\vec{r}=(r_{AB}, r_{AC}, r_{BC})$. (a) Quantum convex body $Q$  corresponds to overlaps attainable by general quantum states. $Q$ can be described compactly by inequality $r_{AB}+r_{BC}+r_{AC}-2\sqrt{r_{AB}r_{BC}r_{AC}}\leq1$.
    (b) Classical polyhedron $C$ is the convex hull of vertices $\{ (0,0,0), (0,0,1), (0,1,0), (1,0,0), (1,1,1)\}$, and describes the region attainable by three states which are diagonal in a single basis. $Q$ properly contains $C$, but also some bulging regions off each non-trivial face of $C$.}
    \label{fig:ditqudit}
\end{figure*}

\subsection{Bounds on overlaps of more than 3 states} \label{sec:graphs}

The bounds (\ref{eq:ub})-(\ref{eq:lb}) obtained for the $P_3$ graph can be leveraged to obtain bounds for quantum states and overlaps described by a general connected graph $G$ with any number $N>2$ of vertices (systems) and edges (known two-state overlaps). We can do this by decomposing $G$ into $P_3$ subgraphs and repeatedly applying the 3-state bounds, as follows.

Consider one $P_3$ subgraph of $G$, labelling its vertices as $A$, $B$ and $C$ as before. By applying our previous reasoning, we obtain an interval of possible (and attainable) values for $r_{BC}$. 
We then cycle $A$ over all vertices that are adjacent to both $B$ and $C$ in $G$, and take the intersection of the inferred possible ranges for $r_{BC}$. A new bona fide edge is then added between vertices $B$ and $C$, with weight given by the corresponding intersection of ranges for $r_{BC}$, creating a new graph $G'$.

The above procedure can be repeated by taking, at every step, a new pair of vertices with a common neighbor and adding a new edge between them, until we have a complete graph with weights corresponding to inferred overlap ranges. In the intermediate steps, one or both overlaps $r_{AB}$ or $r_{AC}$ might actually correspond to a range inferred in a previous step. In that case, the range of values for $r_{BC}$ will be the union of all values allowed by all possible values of $r_{AB}$ or $r_{AC}$ in their corresponding inferred ranges. This procedure is illustrated in Fig.\ \ref{fig:penta} for a 5-cycle graph.

\begin{figure}[t]
    \includegraphics[width=0.45\textwidth]{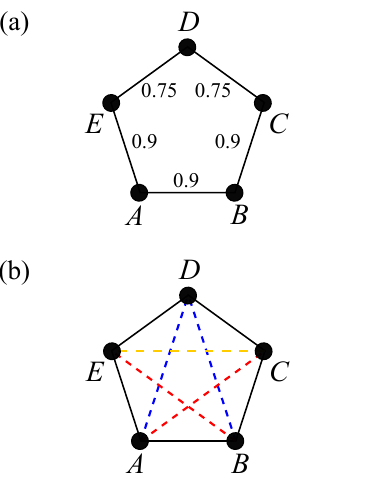}
    \caption{ (a) A 5-cycle graph of given overlaps. Vertices represent pure states and solid edges represent given pairwise overlaps, with values as in the figure. (b) We apply the 3-state bounds in the main text on all $P_3$ subgraphs of the 5-cycle to infer ranges of values for the dashed edges. We find that the red edges $r_{BE},r_{AC} \in [0.64,1]$, the blue edges $r_{AD},r_{BD} \in [0.44, 0.96]$, and the yellow edge $r_{CE} \in [0.25, 1]$. These bounds can, however, be tightened by revisiting the yellow edge, bounding it by a 3-state argument using the original edge $r_{BC}=0.9$ together with  edge $r_{BE} \in [0.64,1]$ inferred in the first round, to conclude that   $r_{CE} \in [0.324,1] $.
   }
    \label{fig:penta}
\end{figure}

After obtaining a complete graph, it may be necessary to revisit each inferred edge to tighten its bounds using edges that were added after it in the sequence of steps (this is also observed in the example of Fig.\ \ref{fig:penta}). We leave it as an open question whether this procedure is optimal, but emphasize that even the initial iteration that produces a complete graph already provides bounds for every pair of vertices, which may be useful even if not tight.

If one is interested only in tight \emph{lower} bounds, the procedure above becomes simpler. It can be shown that the lower bound (\ref{eq:lb}) is a monotonically increasing function of $r_{AB}$ and $r_{AC}$. Therefore, to obtain a lower bound for $r_{BC}$ in a step where $r_{AB}$ and/or $r_{AC}$ are already given by ranges inferred in previous steps, it suffices to take the minimum of each range.

We have completely characterized the two-state overlap values achievable by three pure quantum mechanical states of any dimension, and shown how to obtain bounds for the case of an arbitrary number of states. In the next section we derive more stringent bounds satisfied by the overlaps of coherence-free, diagonal quantum states. These bounds will be crucial for the basis-independent coherence witnesses we propose in section \ref{sec:cohe}.

\section{Overlap bounds for coherence-free states} \label{sec:cbounds}

We now derive a classical analogue of the bounds obtained in the previous section which must be satisfied by triplets of coherence-free states, i.e., states which are diagonal in some common basis.

To be precise, suppose we have some reference observable $\hat{O}$, with its orthonormal basis of eigenvectors $\{ \ket{\phi_i} \}$. If a state $\rho$  is diagonal in this basis, i.e.,
\begin{equation}
  \rho = \sum_i p_i \ketbra{\phi_i}{\phi_i}, \label{eq:cs}
\end{equation}
this state is  coherence-free with respect to $\hat{O}$. If we measure $\hat{O}$ on a coherence-free state such as (\ref{eq:cs}), we directly recover the classical probability distribution $\{p_i\}$.

Consider now two states, $\rho$ and $\sigma$, which are coherence-free with respect to a common observable $\hat{O}$. Their overlap, $\tr(\rho \sigma)$, can be interpreted as the probability of obtaining the same outcome $v(O)$ when measuring $\hat{O}$ independently on the two states:
\begin{align*}
\tr(\rho \sigma) =& \sum_i \bra{\phi_i} \rho \sigma \ket{\phi_i}\\
=&\sum_i \bra{\phi_i} \rho \ket{\phi_i}\bra{\phi_i}\sigma \ket{\phi_i}\\
=& \text{probability that } v(\hat{O})_{\rho}=v(\hat{O})_{\sigma}.
\end{align*}
Note that the overlap encodes some information about the underlying \emph{classical} probability distributions of the two coherence-free states. Besides being basis-independent, the overlap is a quantity that is operationally well-defined, via the SWAP test. In what follows, we prove bounds analogous to inequalities (\ref{eq:ub})-(\ref{eq:lb}), but which must be satisfied for any triplet of classical probability distributions (hence we refer to them as classical bounds). It will immediately follow that the bounds also hold for arbitrary triplets of coherence-free states---i.e., for three states which are diagonal in \emph{some} common, but possibly unknown, basis.

Consider now three independent classical processes $A, B$, and $C$ which assume values $v(A), v(B)$ and $v(C)$ with respective probabilities $p[v(A)], p[v(B)]$ and $p[v(C)]$. Let 
\begin{equation}
    p_{XY}:=p[v(X)=v(Y)]
\end{equation}
denote the probability that the drawn values for two processes $X$ and $Y$ coincide. 

We now use the fact that, given two propositions $a_1, a_2$, it is always true that $p(a_1 \wedge a_2) = p(a_1)+p(a_2)-p(a_1 \vee a_2)$, and consequently
\begin{equation}
p(a_1 \wedge a_2) \ge p(a_1)+p(a_2)-1. \label{lc2}
\end{equation}
If we assign propositions $a_1$ and $a_2$ as follows:
\begin{align*}
a_1 &:= v(A)=v(B)\\
a_2 &:= v(A)=v(C),
\end{align*}
inequality (\ref{lc2}) yields:
\begin{align*}
p[v(A)=v(B) \wedge v(A)=v(C)] \ge p_{AB}+p_{AC}-1.
\end{align*}
Since 
\begin{equation*}
    p_{BC} \ge p[v(A)=v(B) \wedge v(A)=v(C)],
\end{equation*}
we have
\begin{equation} \label{eq:lci1}
p_{BC} \ge p_{AB} + p_{AC}-1.  
\end{equation}

By permuting the indices in inequality (\ref{eq:lci1}), we additionally obtain the following:
\begin{align}
p_{BC} &\le p_{AB}-p_{AC}+1. \label{eq:lci2}\\
p_{BC} &\le p_{AC}-p_{AB}+1. \label{eq:lci3}
\end{align}
Ref.\ \cite{Pitowsky94} gives an introduction to such linear inequalities describing logical coherence, first discussed by George Boole in 1854 \cite{Boole1854}.

We now use the fact that, for states which are coherence-free in a common basis, $p_{AB}=\tr(\rho_A \rho_B)=r_{AB}$ coincides with the overlap. This yields the following classical inequalities for the three two-state overlaps $r_{ij}=\tr(\rho_i \rho_j)$:
\begin{align}
r_{BC} &\ge r_{AB}+r_{AC}-1, \label{eq:rineq1}\\
r_{BC} &\le r_{AB}-r_{AC}+1, \label{eq:rineq2}\\
r_{BC} &\le r_{AC}-r_{AB}+1. \label{eq:rineq3}
\end{align}

Inequalities (\ref{eq:rineq1})--(\ref{eq:rineq3}) are classical analogues of (\ref{eq:ub})--(\ref{eq:lb}) obtained previously for pure quantum states, and define a corresponding region $C$ in the space of allowed overlaps, represented in \cref{fig:ditqudit}(b). 
We reiterate that these inequalities must be satisfied for any set of three states diagonal in the same basis, but that they are given in terms of basis-independent operational quantities. In particular, this means that we can leverage violations of these inequalities to signal the \emph{inexistence} of a common basis in which the three states are coherence-free,  as we discuss in the next section.

As done in \cref{sec:graphs}, we can also consider extending inequalities (\ref{eq:rineq1})--(\ref{eq:rineq3}) to scenarios involving more states and known overlaps. In Appendix \ref{ap:classb} we extend this argument to obtain the following classical bound that applies to any $m$-edge connected graph $G$:
\begin{equation}
r_{kl} \ge 1-m+\sum_{\{i,j\} \in G} r_{ij}, \label{eq:genineq}
\end{equation}
where $k,l$ are any two vertices of $G$, not necessarily connected by an edge. For any pair $\{k,l\}$, the expression above actually represents multiple inequalities arising from all possible connected subgraphs of $G$ that contain those two vertices.

\section{Applications}
\label{sec:applic}
In this section we discuss three applications of our results. In section \ref{sec:cohe} we show how our classical inequalities can be used as a novel type of basis-independent coherence witness. In section \ref{sec:multi} we describe how to obtain experimentally-friendly ways to characterize multiphoton indistinguishability. Finally, in section \ref{sec:dim} we describe how overlap measurements can work as a Hilbert-space dimension witness.

\begin{figure*}[ht]
    \centering
    \includegraphics[width=0.8\textwidth]{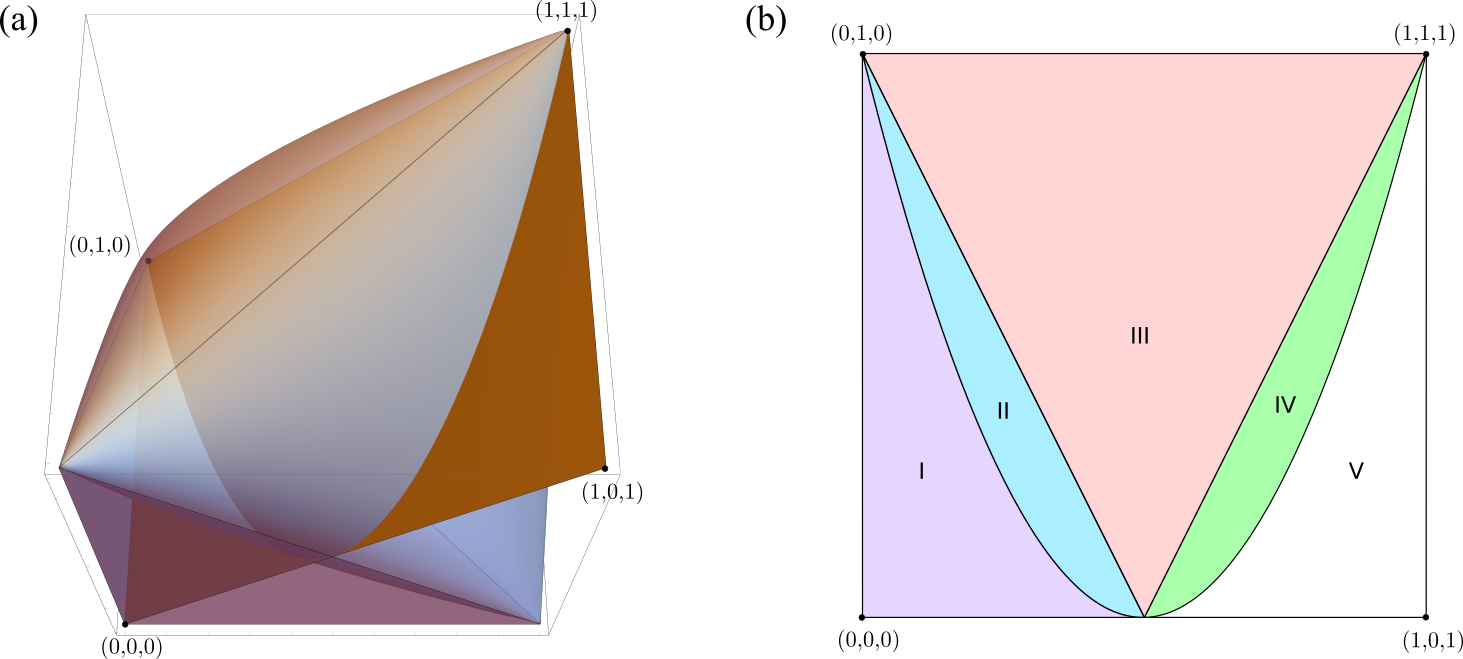}
    \caption{\textbf{Coherence and dimension witnesses in the space of overlap triples $\vec{r}=(r_{AB}, r_{AC}, r_{BC})$.} (a) We take a cross section of the polyhedron of classical, coherence-free states $C$, and its enveloping convex body of general quantum states $Q$, as indicated in the figure. (b) Regions in the cross section plane. Regions I, II, III are in $C$. Overlaps in region I are dimension witnesses, as they are only attainable by (classical or quantum) states spanning a Hilbert-space of dimension $d > 2$. Region II is reached either by quantum states spanning $\ge 2$ dimensions, or by classical states spanning 3 dimensions. Region III is reached by classical or quantum states spanning $\ge 2$ dimensions. Overlaps in region IV are coherence witnesses, as these regions are in $Q$ but not in $C$. Region V cannot be reached by quantum or classical overlaps, as it lies outside of $Q$.
    }
    \label{fig:regions}
\end{figure*}

\subsection{Basis-independent coherence witnesses} \label{sec:cohe}

Quantum coherence has been identified as a resource for quantum algorithms, quantum biology, and quantum thermodynamics \cite{StreltsovAP17}. One way to identify coherence in quantum states is to define a reference basis for the Hilbert space, and then find a coherence witness $W$ \cite{NapoliBCPJA16}, which is an observable whose expectation value can only be negative for states which are not diagonal in the reference basis.

Our results enable us to define a novel type of coherence witness that identifies superposition states even in the absence of information about the reference basis. Our new coherence witnesses require sources of three or more states, and the  measurement of a set of relevant two-state overlaps. If the witness is violated, we know that there is no single basis in which all states are diagonal. Note that this type of witness cannot work for just two states, as the whole range for a single overlap $r_{ij} \in [0,1]$ can be attained by mixtures of states in a fixed basis $\{\ket{i}, \ket{j}\}$. The minimum number of measured overlaps that is sufficient to establish a coherence witness is three, when we have three states and measure all  pairwise overlaps.

It is possible to construct a single quantum circuit that directly measures all overlaps,
$r_{AB}, r_{AC}$ and $r_{BC}$, of three quantum states $\rho_A, \rho_{B}$ and $\rho_{C}$. This circuit is a generalization of the SWAP test, and we represent it in \cref{fig:3swaps}. The probability of obtaining outcome 0 on the first qubit is $p(0)=(1+r_{ij})/2$, as discussed when we reviewed the standard SWAP test. The two first controlled-SWAP gates perform permutations that enable measurement of all pairwise overlaps, and the initial Hadamards prepare a uniform computational basis superposition of the two bottom qubits. Observing outcomes $a_0, a_1 =00, 01, 10, 11$ respectively herald that overlaps $r_{AB}, r_{BC}, r_{AC}, r_{BC}$ were measured.

\begin{figure}[ht]
    \centering
    \includegraphics[width=0.40\textwidth]{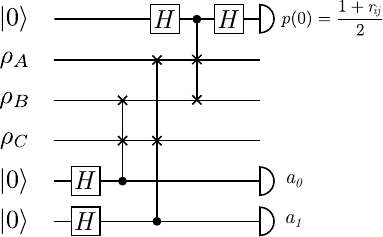}
    \caption{Circuit to measure the three pairwise overlaps of three quantum states $\rho_{A},\rho_{B}$ and $\rho_{C}$. The top auxiliary qubit acts as the control of a standard SWAP test [see \cref{fig:swap2}(a)] , which allows an estimate of the overlap. The first two controlled-SWAP gates serve to perform random (but heralded) permutations to select which of the three pairwise overlaps $r_{AB}, r_{BC}$ and $r_{AC}$ is being measured in each experimental shot.}
    \label{fig:3swaps}
\end{figure}

To witness coherence, we need to obtain a triple of overlaps $\vec{r}=(r_{AB}, r_{BC}, r_{AC})$ that lies outside of the classical polyhedron $C$ described in \cref{fig:ditqudit}(b). In \cref{fig:regions}(a) we superimpose regions $Q$ and $C$ from \cref{fig:ditqudit}, and in \cref{fig:regions}(b) we show one illustrative two-dimensional cut of this space. In \cref{fig:regions}(b), region IV corresponds to overlaps that can be obtained from general quantum states (i.e.\ within the set $Q$), but which are not compatible with coherence-free triplets of states, i.e.\ do not satisfy the classical bounds (\ref{eq:rineq1})-(\ref{eq:rineq3}), being thus outside of $C$. 

\emph{Any} point that lies outside of $C$ but within $Q$ serves as a coherence witness, i.e., as a witness that there is no common basis in which the three states $\rho_{A}, \rho_{B}$ and $\rho_{C}$ are diagonal. In Appendix A we show that the maximal violation of inequality (\ref{eq:rineq1}) is obtained by three 2-dimensional pure states on a great circle on the Bloch sphere, separated by consecutive angles of $\pi/3$ (with $A$ being the central one). For these states, $r_{BC}=1/4$, and yet $r_{AB}=r_{AC}=3/4$, so that $r_{BC}=1/4 < r_{AB}+r_{AC}-1= 1/2$. Maximal violations of inequalities (\ref{eq:rineq2}) and (\ref{eq:rineq3}) can be obtained by permuting indices $A,B,C$ in the previous example.

Finally, one could worry that our coherence test might be fooled by classical states which are correlated, i.\ e., not independent.
If the three events $A$, $B$, and $C$ are not drawn independently, then we cannot interpret the quantities $r_{\rho \sigma}$ as two-state overlaps $\tr{(\rho \sigma)}$. If we nonetheless interpret $r_{\rho \sigma}$ operationally as that which is estimated in a SWAP test between $\rho$ and $\sigma$, then they must still satisfy our classical inequalities if the global state is diagonal in some local basis. To see that, consider the following.
When determining overlaps of two classical, deterministic states $A$ and $B$ via the SWAP test (see Fig. \ref{fig:swap2}-a), the outcome can only be either $p(0)=1$ (if states $A$ and $B$ are the same), or $p(0)=1/2$ (for different states). For overlaps $r_{AB}, r_{BC}, r_{AC}$ of three deterministic classical states $A, B$ and $C$, it is easy to convince oneself that the only possible deterministic values for the overlaps are: $(0,0,0)$ (all states different), $(1,1,1)$ (all states identical), and $(0,0,1), (0,1,0), (1,0,0)$ (one pair of identical states, with a third different from them). These are exactly the five vertices of the classical polyhedron $C$, whose faces are given by our classicality inequalities. Arbitrary convex combinations of deterministic classical states must result in points in the interior of $C$, so even if the states are classically correlated, the quantities $r_{ij}$, estimated via SWAP tests, cannot violate the classicality inequalities (\ref{eq:rineq1})-(\ref{eq:rineq3}).

\subsection{Characterizing multiphoton indistinguishability} \label{sec:multi}

In photonic experiments, a spectral function $\ket{\psi}$ is used to describe the degrees of freedom of a single photon which are inaccessible to the detectors (see e.g.\ \cite{Tichy15}). For two photons with spectral functions $\ket{A}$ and $\ket{B}$ entering different input ports of a 50/50 beam splitter, the probability of leaving the beam splitter in separate ports is $p=(1+r_{AB})/2$, in what is known as the Hong-Ou-Mandel (HOM) efffect \cite{MandelHO87}. The similarity of this expression with the outcome of a SWAP test is not accidental -- the HOM effect can be understood as a photonic realization of a SWAP test \cite{Filip02,Garcia-EscartinCP13}. For applications such as linear-optical quantum computation, it is necessary to have a high degree of indistinguishability between all photons whose worldlines cross. The traditional prescription to characterise multiphoton indistinguishability is to perform one HOM test for each photon pair, thus estimating all pairwise overlaps between the spectral functions $\ket{\psi_i}$. As we now describe, our results allow us to characterize multiphoton indistinguishability with a much smaller number of HOM tests.

If the $N$-vertex graph $G$ describing the $N$ photon states is connected, it has at least $N-1$ edges. For sufficiently high values of these known two-state overlaps, it is possible to use our results to obtain lower bounds on all unmeasured two-state overlaps. This reduces the number of required two-photon HOM tests from $O(N^2)$ to $O(N)$, which considerably simplifies the experimental effort required to characterize multiphoton sources. The simplest illustration of this advantage involves measuring two of the three possible overlaps among three nearly identical photons, and determining the bounds on the third overlap. In Fig. \ref{fig:ditqudit}(a) we see the region $Q$ of quantum-mechanically allowed triples of overlaps $\vec{r}=(r_{AB}, r_{BC}, r_{AC})$. If two of those overlaps are close to 1, it is clear that the third one also needs to be close to 1; our results provide precise bounds on the value of the third overlap.

Moreover, it is possible to construct a single multimode interferometer that measures any set of two-photon overlaps, as we now describe. Each photon is represented by a vertex in graph $G$, with edges representing the overlaps to be measured via HOM tests. Each photon enters a fan-out multimode interferometer with a number of modes that equals the degree of the vertex that represents it. These fan-out interferometers must connect the input port to all output ports; a balanced interferometer, such as the discrete Fourier transformation, is sufficient. Each output port of each fan-out interferometer undergoes a HOM test with an output port of another fan-out interferometer, according to the set of two-photon HOM tests specified by the edges of $G$. This general recipe results in a single multimode interferometer that performs the HOM tests specified by any graph $G$. By measuring the probability of bunching at each output port, we can characterize all two-photon overlaps in graph $G$, using our results to bound the unmeasured two-photon overlaps. In Figure \ref{fig:fan} we show an example of this general construction, featuring a 4-photon graph $G$ and its associated multimode interferometer.

\begin{figure}[ht]
    \centering
    \includegraphics[width=0.40\textwidth]{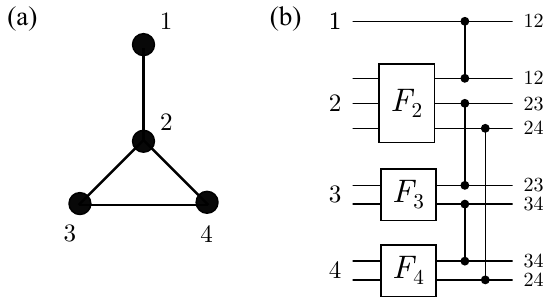}
    \caption{(a) Example of 4-vertex graph $G$ (known as a 4-vertex paw). Vertices represent photons, edges represent overlap measurements to be obtained from Hong-Ou-Mandel (HOM) tests. (b) Multimode interferometer that simultaneously measures all overlaps. Each fan-out Fourier interferometer $F_i$ distributes the input photon uniformly over its output ports. Then multiple two-mode HOM tests are performed (each pair of connected modes represents a 50/50 beam splitter). Indices at each final output port indicate which overlap will be measured using the probability of bunching in that port.}
    \label{fig:fan}
\end{figure}

This procedure gives a constructive way of measuring any set of two-photon overlaps, for any number of photons. This is a generalization of the interferometers we described in \cite{BrodGVFSS19} and in the recent preprint \cite{Giordani19b}, which correspond to the particular cases of $N$-vertex star graphs and the 4-vertex linear chain, respectively.

The HOM effect also illustrates an important aspect of our results, namely, that the applications do not require knowledge of all degrees of freedom of the system. The partial distinguishability observed between two photons can depend on  several unmeasured, inaccessible, and potentially continuous degrees of freedom, but nonetheless the outcome of the HOM effect gives an estimate of $r_{AB}$ suitable for all applications discussed here.

\subsection{Dimension witnesses} \label{sec:dim}

The bounds on overlaps may be different if we introduce constraints on the dimension of the Hilbert space spanned by the states. This means that overlap measurements can be used to certify that the states span a certain minimum Hilbert space dimension, which may have relevance to attacks on quantum cryptographic key distribution. Let us illustrate this application with a simple scenario.

Let us again consider the situation of three pure quantum states with two known overlaps $r_{AB}, r_{AC}$. In the most general situation we considered in section \ref{sec:b3states} and Appendix \ref{apx:overlaps}, the three states may span a 3-dimensional Hilbert space. In Appendix \ref{apx:qubits} we assume instead that the Hilbert space dimension spanned is only 2-dimensional, obtaining the bounds:
\begin{align}
r_{BC} &\le \left( \sqrt{r_{AB}r_{AC}}+\sqrt{(1-r_{AB})(1-r_{AC})}\right)^2 ,\label{eq:ubqb}\\
r_{BC} &\ge \left( \sqrt{r_{AB}r_{AC}}-\sqrt{(1-r_{AB})(1-r_{AC})}\right)^2. \label{eq:lbqb}
\end{align}
The upper bound (\ref{eq:ubqb}) coincides with that of the general case. The lower bound (\ref{eq:lbqb}), on the other hand, coincides with that of the general case only when $r_{AB}+r_{AC} \ge 1$; when $r_{AB}+r_{AC}<1$, the general case predicts a trivial lower bound ($r_{BC} \ge 0$). Thus, when $r_{AB}+r_{AC} < 1$, a measured value of $r_{BC}$ that violates inequality (\ref{eq:lbqb}) indicates that the Hilbert space spanned by the three states is necessarily  3-dimensional. As an example, any set of overlaps in region I of Fig.\  \ref{fig:regions}(b) is a dimension witness.

Equation (\ref{eq:implic3s}) is a single inequality that defines the convex body $Q$ of three overlaps achievable by three quantum states spanning, in general, 3 dimensions. The restriction to a 2-dimensional spanning Hilbert space corresponds to further imposing the constraint:
\begin{equation}
    r_{AB}+r_{BC}+r_{AC}+2\sqrt{r_{AB}r_{BC}r_{AC}}\geq1 .
\end{equation}

We emphasize that our dimension witness does not pertain to the dimension of the physical Hilbert space of the system, but rather to the dimension of the space spanned by the three states. It is possible to generate e.g.\ three photonic states which are defined on an infinite-dimensional Hilbert space, but which are not linearly independent vectors in this space. As discussed previously, in that case the HOM test gives a direct estimate of the overlaps, and allows us to rule out the possibility that the states span a 2-dimensional Hilbert space even without knowledge of which degrees of freedom are being physically used to encode the states.

Finally, we point out that classical dimension witnesses can also be found, i.e.\ for the coherence-free, diagonal states discussed in section \ref{sec:cbounds}. For the case of three classical bits, it is easy to show that the classical polyhedron of allowed overlaps is the tetrahedron defined as the convex hull of vertices $(r_{AB}, r_{AC}, r_{BC})=(0,0,1), (0,1,0), (1,0,0)$ and $(1,1,1)$. In comparison with the general classical polyhedron $C$ of Fig.\ \ref{fig:ditqudit}(b),  vertex $(0,0,0)$ is missing, due to the impossibility of finding three mutually orthogonal states of a single bit. Any overlap triple outside of the tetrahedron cannot correspond to diagonal, coherence-free states spanning two dimensions only.

\section{Conclusion}
\label{sec:conclusion}
We have considered the problem of obtaining constraints on the possible values of  pairwise quantum state overlaps among a set of quantum states. We completely solved the case of $N=3$ pure states, discussing how some sets of states violate the expectations of classical, coherence-free models, whose (linear) constraints we have also described. We have also discussed how to leverage these results to obtain bounds for the general scenario involving any number of states.

Our classicality inequalities derive from logical coherence, mirroring arguments used to derive non-contextuality and Bell non-locality inequalities \cite{Pitowsky94}. It would be interesting to obtain more concrete connections between our results and these two other notions of classicality. As quantum coherence is required to violate our classicality inequalities, our results may have an interpretation in terms of resource theories for quantum coherence \cite{StreltsovAP17}. In epistemic models of quantum theory \cite{BarrettCLM14, Leifer14, RingbauerDBCWF15}, the overlap between quantum states is partially accounted for by overlapping probability measures over the more fundamental ontic states; our classicality inequalities may prove helpful in quantifying the extent to which this is possible.

We have described three applications of our results: a novel type of coherence witness, dimension witnesses, and efficient characterisation of multiphoton indistinguishability. Our results may also find applications in other state-comparison protocols, such as quantum fingerprinting \cite{BuhrmanCWdW01}, and problems in quantum communication complexity \cite{BuhrmanCMdW10} and quantum machine learning \cite{BiamonteWPRWL17}.

\begin{acknowledgments}
We acknowledge support from CNPq project Instituto Nacional de Ci{\^e}ncia e Tecnologia de Informa{\c c}{\~a}o Qu{\^a}ntica. 
\end{acknowledgments}

\bibliography{ernesto_large}
\bibliographystyle{apsrev}

\onecolumngrid
\appendix

\section{Proof of quantum bounds on overlaps for three pure states}\label{apx:overlaps}

In this Appendix we deal with the case of three pure states with two given overlaps $r_{AB}=\ovlap{A}{B}$ and $r_{AC}=\ovlap{A}{C}$, and one unknown overlap $r_{BC}=\ovlap{B}{C}$, for which we find tight bounds, as described in the main text. In subsection \ref{apx:apgen} below we deal with the case of three pure states of any dimensionality; in section \ref{apx:purequbits} we solve the problem with the constraint that the three states that span only a 2-dimensional Hilbert space; and in section \ref{apx:maxviol} we find states yielding maximal violation of our classicality inequalities for this scenario [inequalities (\ref{eq:lci1})-(\ref{eq:lci3}) of the main text].

\subsection{General bounds on \texorpdfstring{$\ovlapr{B}{C}$}{}}
\label{apx:apgen}
Suppose we have three pure states $\ket{A}$, $\ket{B}$, and $\ket{C}$, such that both $\ovlapr{A}{B}$ and $\ovlapr{A}{C}$ are known. We begin by parameterizing these state in an economical way.

Without loss of generality, assume that the three states are spanned by basis states $\{\ket{0},\ket{1},\ket{2}\}$. Furthermore, we can align the basis states in a convenient way to use the following parameterization:
\begin{subequations}
\begin{align*}
\ket{A} &= \ket{0} \\
\ket{B} &= \cos{\beta} \ket{0} + e^{i a} \sin{\beta} \ket{1} \\
\ket{C} &= \cos{\gamma} \ket{0} + e^{i b}\sin{\gamma} \sin{\alpha} \ket{1}+e^{i c}\sin{\gamma} \cos{\alpha} \ket{2}.
\end{align*}
\end{subequations}
We can now apply a unitary transformation $U = \textrm{diag}(1,e^{-ia},e^{-ic})$ to all states and redefine $\phi = b-a$. This gives us three new states that have the same pairwise overlaps as the original ones, and we reach our final parameterization:
\begin{subequations}
\begin{align*}
\ket{A} &= \ket{0} \\
\ket{B} &= \cos{\beta} \ket{0} + \sin{\beta} \ket{1} \\
\ket{C} &= \cos{\gamma} \ket{0} + e^{i \phi}\sin{\gamma} \sin{\alpha} \ket{1}+\sin{\gamma} \cos{\alpha} \ket{2},
\end{align*}
\end{subequations}
where $\alpha$, $\beta$, $\gamma$  $\in [0,\pi/2]$ and $\phi \in [0, 2\pi)$. From this we have the two \emph{known} overlaps
\begin{subequations}
\begin{align*}
    \ovlapr{A}{B} & = \cos^2\beta, \\
    \ovlapr{A}{C} & = \cos^2\gamma,
\end{align*}
\end{subequations}
which are fixed, and the third overlap we wish to bound:
\begin{align} 
    \ovlapr{B}{C} =& \cos^2 \gamma \cos^2 \beta +\sin^2\gamma\sin^2\beta \sin^2\alpha +2 \sin\gamma\cos\gamma\sin\beta\cos\beta\sin\alpha\cos\phi.\label{eq:toopt}
\end{align}
Our goal now is to find the extrema of \cref{eq:toopt} with respect to parameters $\alpha$ and $\phi$. To that end we write
\begin{subequations}
\begin{align}
    \frac{\partial  \ovlapr{B}{C}}{\partial \alpha} & = 2 \sin^2 \beta \sin^2 \gamma \sin \alpha \cos \alpha  + 2 \sin \beta \cos \beta \sin \gamma \cos \gamma \cos \alpha \cos \phi \notag \\ &= 0 \label{eq:extr1}\\
    \frac{\partial  \ovlapr{B}{C}}{\partial \phi} & = - 2 \sin \beta \cos \beta \sin \gamma \cos \gamma \sin \alpha \sin \phi \notag \\&= 0 \label{eq:extr2}
\end{align}
\end{subequations}
Let us now break down all possible solutions of the above equations. Consider first \cref{eq:extr2}. It is true if either $\sin \alpha  =0$ or $\sin \phi = 0$.

If $\sin \alpha  =0$  then $\ket{C}$ has no support on $\ket{1}$, and the overlap between $\ket{B}$ and $\ket{C}$ reduces to
\begin{equation*}
     \ovlapr{B}{C} = \cos^2 \beta \cos^2 \gamma,
\end{equation*}
which is fixed by the two known overlaps.

If $\sin \phi = 0$, we have that naturally $\cos \phi = \pm 1$, and \cref{eq:extr1} reduces to
\begin{equation*}
    (\sin \beta \sin \gamma \sin\alpha \pm \cos \beta \cos \gamma) \cos \alpha = 0.
\end{equation*}
This equation has two solutions. The first is $\cos \alpha = 0$, in which case $\sin \alpha = 1$ (recall that $\alpha \in [0,\pi/2]$) and we have 
\begin{equation*}
     \ovlapr{B}{C} = \cos^2 (\beta \mp  \gamma).
\end{equation*}
The other solution occurs when 
\begin{equation*}
     \sin \alpha = \mp\frac{\cos \beta \cos \gamma}{\sin \beta \sin \gamma},
\end{equation*}
in which case we have
\begin{equation*}
    \ovlapr{B}{C} = 0.
\end{equation*}

We have thus obtained four extrema of $\ovlapr{B}{C}$: 
\begin{equation}
  \begin{cases}
    \cos^2 \beta \cos^2 \gamma,& \\
    \cos^2 (\beta \pm \gamma), &\\
    0,              & \text{if } \sin \alpha = \mp\frac{\cos \beta \cos \gamma}{\sin \beta \sin \gamma}
\end{cases}  
\end{equation}
We now need to check whether each of these values are maxima, minima or saddle points. The value 0 clearly is a minimum, but we also need to check under which conditions it can happen. As we show shortly, this minimum is attainable if
\begin{equation}
    \ovlapr{A}{B} + \ovlapr{A}{C} = \cos^2 \beta + \cos^2 \gamma >1.
\end{equation}
Interestingly, this is the same condition as that necessary to guarantee a nontrivial bound for overlaps of classical probability distributions, or coherence-free states [cf.\ Equation\ (\ref{eq:rineq1})]. In other words: although the general lower bound is looser than the lower bound for coherence-free states, the condition that guarantees they are nonzero is the same for both.

To investigate the extrema of $\ovlapr{B}{C}$ we use the following:

\begin{lemma}\label{lem:lemma1}
If $x, y \in (0,\pi/2)$ are such that $\cos^2 x + \cos^2 y > 1$, then the following hold:
\begin{align}
    \frac{\cos x \cos y}{\sin x \sin y}  &> 1 \\
    \cos^2 x \cos^2 y & \in [\cos^2(x+y), \cos^2(x-y)]
\end{align}
\end{lemma}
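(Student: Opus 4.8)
The plan is to first establish a clean reformulation of the hypothesis that drives both claims. For $x,y\in(0,\pi/2)$, the inequality $\cos^2 x + \cos^2 y > 1$ is equivalent to $\cos^2 x > 1-\cos^2 y = \sin^2 y$, and since every trigonometric factor is strictly positive on this interval I can take positive square roots to get $\cos x > \sin y = \cos(\pi/2 - y)$. Because $\cos$ is strictly decreasing on $(0,\pi/2)$ and both $x$ and $\pi/2 - y$ lie in that interval, this is in turn equivalent to $x + y < \pi/2$. This single equivalence, $\cos^2 x + \cos^2 y > 1 \iff x+y<\pi/2$, is the engine of the whole lemma.

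For the first claim, I would note that $x+y<\pi/2$ gives $\cos(x+y)>0$, and the angle-addition formula $\cos(x+y)=\cos x\cos y-\sin x\sin y$ then yields $\cos x\cos y>\sin x\sin y$. Dividing through by $\sin x \sin y>0$ gives $\frac{\cos x\cos y}{\sin x\sin y}>1$ at once.

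For the second claim, I would introduce the abbreviations $p=\cos x\cos y>0$ and $q=\sin x\sin y>0$, so that the addition formulas give $\cos(x+y)=p-q$ and $\cos(x-y)=p+q$, hence $\cos^2(x+y)=(p-q)^2$, $\cos^2(x-y)=(p+q)^2$, and $\cos^2 x\cos^2 y=p^2$. The desired membership $p^2\in[(p-q)^2,(p+q)^2]$ then reduces to two elementary algebraic inequalities. The upper bound $p^2\le(p+q)^2$ holds unconditionally, being equivalent to $q(2p+q)\ge 0$. The lower bound $p^2\ge(p-q)^2$ is equivalent to $q(2p-q)\ge 0$, i.e.\ $2p\ge q$, which follows immediately from the first claim since there $p>q$.

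The computations here are all routine, and in fact the upper bound needs nothing beyond positivity of $p,q$. The only step genuinely requiring care is the equivalence in the first paragraph: one must justify taking square roots (legitimate precisely because each cosine and sine is strictly positive on the open interval) and invoke the monotonicity of cosine on the correct domain to pass from $\cos x>\cos(\pi/2-y)$ to $x+y<\pi/2$. Once that equivalence is in hand, both parts drop out directly, with the hypothesis entering only through the lower bound of the second claim.
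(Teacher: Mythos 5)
Your proof is correct and follows essentially the same route as the paper: both establish the key fact $\cos x\cos y>\sin x\sin y$ first and then deduce the second claim from the expansion of $\cos^2(x\pm y)$, your $q(2p\pm q)\ge 0$ being the paper's sign analysis of $\sin^2 x\sin^2 y\bigl(1\mp\tfrac{2}{\tan x\tan y}\bigr)$ in different notation. The only minor difference is that you derive the first claim geometrically, via the equivalence $\cos^2 x+\cos^2 y>1\iff x+y<\pi/2$ and the positivity of $\cos(x+y)$, whereas the paper obtains $(\tan x\tan y)^2<1$ by direct algebraic expansion of $(1-\cos^2 x)(1-\cos^2 y)/(\cos^2 x\cos^2 y)$; both are valid.
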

\begin{proof}
For the first part, write 
\begin{align*}
    (\tan x \tan y)^2 =& \frac{(1-\cos^2 x)(1-\cos^2 y)}{\cos^2 x \cos^2 y} \\
     =& \frac{1 - (\cos^2 x + \cos^2 y)}{\cos^2 x \cos^2 y} + 1 \\ & < 1,
\end{align*}
from which the inequality follows. For the second part, write
\begin{align*}
    [\cos (x\pm y)]^2 = &\cos^2 x \cos^2 y + \sin^2 x \sin^2 y \mp 2 \sin x \sin y \cos x \cos y \\
    = &\cos^2 x \cos^2 y  + \sin^2 x \sin^2 y \left(1 \mp \frac{2}{\tan x \tan y}\right)
\end{align*}
using the first inequality we see that the terms in parenthesis has the same sign as the plus/minus sign within, and so the second claim follows.
\end{proof}

We now set $x = \beta$ and $y = \gamma$ in \cref{lem:lemma1}, to conclude the following: whenever $\ovlapr{A}{B} + \ovlapr{A}{C} > 1$, the minimum $\ovlapr{B}{C} = 0$ does not occur. Furthermore, in this case the extremum given by $\cos^2 \beta \cos^2 \gamma$ is contained between the two values of $\cos^2 (\beta \pm \gamma)$, from which we conclude it must be a saddle point. Finally, combining all these together we conclude that, whenever 
\begin{equation*}
    \ovlapr{A}{B} + \ovlapr{A}{C} > 1,
\end{equation*}
the lower and upper bounds for $\ovlapr{B}{C}$ are $\cos^2 (\beta \pm \gamma)$. When $\ovlapr{A}{B} + \ovlapr{A}{C} \leq 1$, the upper bound for $\ovlapr{B}{C}$ is the same but the lower bound is 0. 

It is important to emphasize that, since these bounds were obtained by direct minimization over the free parameters $\alpha$ and $\phi$, they are always attainable. That is, given the two fixed overlaps, there always exist states $\ket{A}$, $\ket{B}$, and $\ket{C}$ for which $\ovlapr{B}{C}$ achieves the lower and the upper bounds. 

\subsection{Bounds for the case of three states spanning a 2-dimensional Hilbert space} \label{apx:purequbits}

Suppose now that the three states only span a 2-dimensional Hilbert space. For example, this could happen if the three systems happen to be qubits. Our general parameterization of the three states can now be written as

\begin{subequations}
\begin{align*}
\ket{A} &= \ket{0} \\
\ket{B} &= \cos{\beta} \ket{0} + \sin{\beta} \ket{1} \\
\ket{C} &= \cos{\gamma} \ket{0} + e^{i \phi}\sin{\gamma} \ket{1},
\end{align*}
\end{subequations}
where $\beta$, $\gamma$  $\in [0,\pi/2]$ and $\phi \in [0, 2\pi)$. From this it follows that
\begin{align*}
    \ovlapr{B}{C} = \cos^2 \beta \cos^2 \gamma +\sin^2\beta\sin^2\gamma +  2 \sin\beta\cos\beta\sin\gamma\cos\gamma\cos\phi.
\end{align*}
Differentiating with respect to $\phi$ to obtain the extrema we find
\begin{equation*}
    \frac{\partial  \ovlapr{B}{C}}{\partial \phi} = - 2 \sin \beta \cos \beta \sin \gamma \cos \gamma  \sin \phi = 0.
\end{equation*}
We conclude that the two extrema are
\begin{align*}
& \cos^2 \beta \cos^2 \gamma +\sin^2\beta\sin^2\gamma \pm 2 \sin\beta\cos\beta\sin\gamma\cos\gamma \notag = \cos^2(\alpha\pm \epsilon).
\end{align*}
Although this expression is similar to the two bounds found in the general case, the analysis here can be qualitatively different due to the possibility that $r_{AB}+r_{AC}\le 1$, in which case the lower bound for the qudit case is 0 in contrast with the qubit case.

\subsection{Maximal violation of classical bounds}
\label{apx:maxviol}
We now prove that the violation of the classical bound of 1/4 described in the text is the maximum possible. To prove this, we want to maximize the difference between the classical lower bound of Eq.\ (\ref{eq:rineq1}) and the quantum lower bound of Eq.\ (\ref{eq:lb}). In our parameterization, this difference can be written as
\begin{equation*}
D = \cos^2 \beta + \cos^2 \gamma -1 - \cos^2 (\beta + \gamma).
\end{equation*}
Notice that we assuming are assuming $\cos^2 \beta + \cos^2 \gamma >1$, otherwise both classical and quantum lower bounds become trivial. We now wish to maximize $D$ with respect to both $\beta$ and $\gamma$. To do this, we need
\begin{align*}
\frac{\partial  D}{\partial \beta} &= -2 \cos \beta \sin \beta + 2 \cos(\beta + \gamma)\sin(\beta +\gamma)   = 0,\\
\frac{\partial  D}{\partial \gamma} &= -2 \cos \gamma \sin \gamma + 2 \cos(\beta + \gamma)\sin(\beta +\gamma)=0.
\end{align*}
Simple manipulations show this is equivalent to
\begin{align*}
    \sin \gamma \cos (2 \beta + \gamma) &= 0, \\
    \sin \beta \cos (\beta + 2\gamma) &= 0.
\end{align*}
Recall that $\gamma, \beta \in [0, \pi/2]$. In this range, the solutions to these equations with $\sin \gamma = 0$ or $\sin \beta = 0$ are minima, since they lead to $D = 0$. The remaining solutions correspond to 
\begin{align*}
    \cos (2 \beta + \gamma) &= 0, \\
    \cos (\beta + 2\gamma) &= 0.
\end{align*}
In the domain of interest, these equations have a few solutions. By enumerating them it is easy to check that the  maximum of $D$ is 1/4. This maximum is obtained, for example, for $\gamma = \beta = \pi/3$. A set of three states which has these values for and reaches the maximal violation of the classical bound is
\begin{align*}
    \ket{A} & = \ket{0}, \\
    \ket{B} & = \tfrac{1}{2} \left(\ket{0} + \sqrt{3} \ket{1}\right), \\
    \ket{C} & = \tfrac{1}{2} \left(\ket{0} - \sqrt{3} \ket{1}\right).
\end{align*}
These states, up to a rotation of the Bloch sphere, correspond to three states in the equator of the Bloch sphere separated by consecutive angles of $\pi/3$, with $\ket{A}$ in the center, as claimed in the main text.

A similar calculation shows that the maximal quantum violation of the classical \emph{upper} bound is also 1/4. A set of three states that achieve this is
\begin{align*}
    \ket{A} & = \ket{0}, \\
    \ket{B} & = \tfrac{1}{2} \left(\ket{0} + \sqrt{3} \ket{1}\right),\\
    \ket{C} & = \tfrac{1}{2} \left(\sqrt{3}\ket{0} + \ket{1}\right).
\end{align*}

Note that these also correspond to three states in a great circle of the Bloch sphere separated by consecutive angles of $\pi/3$, as in the case of the maximal violation of the lower bound, but now we have $\ket{B}$ in the middle. This corresponds to the observation, in the main text, that the classical bounds of (\ref{eq:lci1}-\ref{eq:lci3}) can be obtained by each other from a relabeling of indices $A$, $B$ and $C$.

\section{Quantum bounds for mixed qubit states} \label{apx:qubits}

In this Appendix we prove that our quantum bounds of Eqs.\ (\ref{eq:ub})--(\ref{eq:lb}) extend to arbitrary mixed states for \emph{qubits}. As in Appendix \ref{apx:purequbits}, the quantum bounds we obtain for mixed qubit states in what follows hold regardless of whether condition $r_{AB}+r_{AC} > 1$ is satisfied.

As a warm-up, suppose that qubit $A$ is in a pure state (say, $\ket{0}$). Now suppose $B$ and $C$ are in states $\rho$ and $\sigma$, respectively, parameterized by:
\begin{equation*}
    \rho = 
 \begin{pmatrix}
  \rho_0 & \rho_1  \\
  {\bar{\rho_1}} & 1-\rho_0 
 \end{pmatrix} \textrm{ and }
 \sigma = 
 \begin{pmatrix}
  \sigma_0 & \sigma_1  \\
  {\bar{\sigma_1}} & 1-\sigma_0 
 \end{pmatrix},
\end{equation*}
where $\rho_0, \sigma_0 \in [0,1]$. Clearly the conditions that $\rho$ and $\sigma$ have trace 1 and are Hermitian are already taken into account by the parameterization. For these to be proper mixed states we also need $\tr \rho^2 \leq 1$ and $\tr \sigma^2 \leq 1$, which can be written as
\begin{align}
    \left| \rho_1 \right|^2 &\leq \rho_0 (1-\rho_0), \label{eq:purity1}\\
    \left| \sigma_1 \right|^2 &\leq \sigma_0 (1-\sigma_0) \label{eq:purity2}.
\end{align}

With these parameterizations, we can write the overlaps as
\begin{align}
    r_{AB} &= \rho_0,  \label{eq:rabapx}\\
    r_{AC} &= \sigma_0,  \label{eq:racapx}\\
    r_{BC} &= \rho_0 \sigma_0 + (1-\rho_0)(1-\sigma_0) + \bar{\rho_1}\sigma_1 + \rho_1 \bar{\sigma_1}. \label{eq:rbcapx}
\end{align}
We want to show that $r_{-} \leq r_{BC} \leq r_{+}$, where [cf.\ Equations (\ref{eq:ub})--(\ref{eq:lb})]
\begin{align}
    r_{\pm} = & r_{AB}r_{AC} + (1-r_{AB})(1-r_{AC})  \pm 2 \sqrt{r_{AB}r_{AC} (1-r_{AB})(1-r_{AC})} \notag\\
     = & \rho_0 \sigma_0 + (1-\rho_0)(1-\sigma_0)  \pm 2  \sqrt{\rho_0 \sigma_0 (1-\rho_0)(1-\sigma_0)}
\end{align}
Comparing the above with Eq.\ (\ref{eq:rbcapx}) we see that proving the required bounds on $r_{BC}$ is equivalent to showing that
\begin{equation}
    \left| \rho_1 \bar{\sigma_1} + \sigma_1 \bar{\rho_1} \right| \leq 2 \sqrt{\rho_0 \sigma_0 (1-\rho_0)(1-\sigma_0)}.
\end{equation}
Note now that
\begin{equation}
\left| \rho_1 \bar{\sigma_1} + \sigma_1 \bar{\rho_1} \right| \leq 2 \left| \rho_1 \right| \left| \sigma_1 \right| \leq 2 \sqrt{\rho_0 (1-\rho_0) \sigma_0 (1-\sigma_0)},
\end{equation}
where the first inequality follows from the triangle inequality, and the second follows from Eqs.\ (\ref{eq:purity1})--(\ref{eq:purity2}). Since the inequality holds, this implies that $r_{-} \leq r_{BC} \leq r_{+}$, as claimed.

Let us now extend the above result to when qubit $A$ is in a mixed state as well. Let us work in the basis where the state of qubit $A$ is diagonal, and we parameterize it as
\begin{equation*}
    \psi = 
 \begin{pmatrix}
  \psi_0 & 0  \\
 0 & \psi_1 
 \end{pmatrix},
\end{equation*}
such that $\psi_0 + \psi_1 = 1$. In this case, we have that Eqs.\ (\ref{eq:rabapx})--(\ref{eq:rbcapx}) become
\begin{align}
    r_{AB} &= \psi_0 \rho_0 + \psi_1 (1-\rho_0), \label{eq:rabapx2} \\
    r_{AC} &= \psi_0 \sigma_0 +  \psi_1 (1-\sigma_0),  \label{eq:racapx2}\\
    r_{BC} &= \rho_0 \sigma_0 + (1-\rho_0)(1-\sigma_0) + \bar{\rho_1}\sigma_1 + \rho_1 \bar{\sigma_1}. \label{eq:rbcapx2}
\end{align}
Our goal is again to prove that, for arbitrary $\psi_0$ and $\psi_1$, the bounds $r_{-}(\psi_0,\psi_1) \leq r_{BC} \leq r_{+}(\psi_0,\psi_1)$ hold, where
\begin{align*}
    r_{\pm}(\psi_0,\psi_1) = & r_{AB}r_{AC}+ (1-r_{AB})(1-r_{AC})  \pm 2 \sqrt{r_{AB}r_{AC} (1-r_{AB})(1-r_{AC})}.
\end{align*}
We write the dependence of $r_{pm}$ on $\psi_0$ and $\psi_1$ explicitly, but omit this dependence from $r_{AB}$ and $r_{AC}$ for simplicity of notation (note that $r_{BC}$ does not depend on $\psi_0$ and $\psi_1$).

We proved that $r_{-}(\psi_0,\psi_1) \leq r_{BC} \leq r_{+}(\psi_0,\psi_1)$ holds for $A$ pure, or equivalently for both limits $\psi_0 = 1$ and $\psi_1 = 1$. The fact that the bounds hold for all $\psi_0$ and $\psi_1$ follows from their concavity/convexity. More specifically, define the functions
\begin{equation*}
    f_{\pm}(x,y) = (\sqrt{x y}\pm \sqrt{(1-x)(1-y)})^2
\end{equation*}

A function $f(x,y)$ is convex on a convex region of  $\mathbb{R}^2$ if and only if its Hessian matrix is positive semidefinite in the interior of that region \cite{Bertsekas}. By testing this property we can show that $f_{-}$ is convex and $f_{+}$ is concave in the region defined by $x, y \in (0,1)$. In particular this means that, for $a,b \in [0,1]$ such that $a+b = 1$, we have
\begin{align*}
    f_{+}(a x_1 + b x_2,a y_1 + b y_2) & \geq a f(x_1, y_1) + b f(x_2,y_2), \\
    f_{-}(a x_1 + b x_2,a y_1 + b y_2) & \leq a f(x_1, y_1) + b f(x_2,y_2)
\end{align*}
By choosing $a=\psi_0$, $b=\psi_1$, $x_1 = \rho_0$, $x_2 = 1-\rho_0$, $y_1 = \sigma_0$, and $y_2 = 1-\sigma_0$, the above inequalities imply
\begin{equation*}
    r_{+}(\psi_0,\psi_1)\geq \psi_0 r_{+}(1,0) + \psi_1 r_{+}(0,1) \geq \psi_0 r_{BC} + \psi_1 r_{BC} = r_{BC},
\end{equation*}
where the last inequality follows from our previous results for the case of pure $A$ and from the fact that $r_{BC}$ does not depend on $\psi_0$ and $\psi_1$. By combining the above with a similar reasoning based on the convexity of $r_{-}$ we obtain
\begin{equation*}
    r_{-}(\psi_0,\psi_1) \leq r_{BC} \leq r_{+}(\psi_0,\psi_1),
\end{equation*}
as desired.

\section{Proof of the classical bounds}\label{ap:classb}

In this Appendix we prove bounds for the joint probability of $N$ events. These were proven by George Boole \cite{Boole1854}, but we include a proof for completeness. We then use those results to obtain inequalities that must be satisfied by classical, coherence-free diagonal states [as in Eq.\ (\ref{eq:cs}) of the main text], with known pairwise overlaps described by any connected graph $G$. These general inequalities,  described in Eq. (\ref{eq:genineq}) of the main text, have as a particular case the classical bounds for the 3-vertex graph $P_3$, i.e.\ inequalities (\ref{eq:lci1}-\ref{eq:lci3}).

Consider $N$ logical propositions $a_1, a_2, \ldots, a_N$, and let $p(a_i)$ be the probability that proposition $a_i$ holds. Let $p(a_1 \wedge a_2 \wedge \ldots \wedge a_N)$ be the probability that the joint proposition holds, i.e.\  that all $\{a_i\}$ are simultaneously true. We now show that logical coherence implies simple linear inequalities that $p(a_1 \wedge a_2 \wedge \ldots \wedge a_N)$ must satisfy.

We start with the simplest case of $N=2$ propositions $
\{a_1, a_2\}$. Using $0$ for false and $1$ for true, let us write the truth table for the AND ($\wedge$) function:

\begin{table}[ht]
  \begin{center}
    \begin{tabular}{|c|c|c|}
    \hline
      \textbf{$a_1$} & \textbf{$a_2$} & \textbf{$a_1 \wedge a_2$}\\
      \hline \hline
     0 & 0 & 0\\ \hline
     0 & 1 & 0\\ \hline
     1 & 0 & 0\\ \hline
     1 & 1 & 1\\ \hline
    \end{tabular}
\end{center}
\caption{Truth table for AND ($\wedge$) function.}
\label{table:and2}
\end{table}
Let us interpret each row in the table above as a vector $\vec{p}$ in a 3-dimensional space of probabilities $\vec{p}=(p(a_1), p(a_2), p(a_1 \wedge a_2))$. Since the table contains all possible truth assignments for $a_1$ and $a_2$, the most general, logically coherent vector $\vec{p}$ must be a convex combination of the rows of Table \ref{table:and2}. In our case, the logical coherence conditions for $\{ p(a_1), p(a_2), p(a_1 \wedge a_2) \}$ are simply the faces of the tetrahedron whose vertices are the rows of Table \ref{table:and2}.  The four faces are described by inequalities:
\begin{align}
p(a_1 \wedge a_2) &\ge 0 ; \label{in1}\\ 
p(a_1 \wedge a_2) &\le p(a_1) ; \label{in2}\\ 
p(a_1 \wedge a_2) &\le p(a_2) ; \label{in3}\\ 
p(a_1 \wedge a_2) &\ge p(a_1)+p(a_2)-1. \label{in4}
\end{align}
Inequality (\ref{in1}) is trivial; inequalities (\ref{in2}) and (\ref{in3}) simply state that the conjunction of two events must not happen more often than each of them separately, whereas inequality (\ref{in4}) gives a bound on $p(a_1 \wedge a_2)$, which follows from the inclusion-exclusion principle in probability theory.

The method described above, due to Pitowsky \cite{Pitowsky89,Pitowsky94}, is general, and can be applied to $m$ independent propositions together with any set of Boolean functions of them. First, we compile a list all $2^m$ truth values for the $m$ independent propositions, together with the corresponding truth values of the Boolean functions of interest (in the case above, $a_1 \wedge a_2$). The rows of the resulting table are then interpreted as vertices of a polytope, and its facets as our desired logical coherence conditions. These facets can be found using well-known convex hull algorithms (e.g.\ \cite{BarberDH96}).

We can apply the above method to $m$ propositions $a_1, a_2, ..., a_m$ and their joint proposition $a_1 \wedge a_2 \wedge ... \wedge a_m$. Each vertex of the polytope is a vector in a ($m+1$)-dimensional space of probabilities. Given the simplicity of the vertex list for this polytope, it is easy to check that the following inequalities are satisfied by all vertices, and hence by the complete polytope:
\begin{align}
p(a_1 \wedge a_2 \wedge ... \wedge a_m) &\ge 1-m+\sum_{i=1}^{m} p(a_i),\label{lcineq}\\
p(a_1 \wedge a_2 \wedge ... \wedge a_m) & \le p (a_i), \; \forall i = 1 \ldots m. \label{ineqlb}
\end{align}
Inequality (\ref{lcineq}) is saturated by exactly $m$ affinely independent vertices--those containing exactly two zeroes, plus the vertex with only ones--and thus constitutes a facet of the polytope. Each inequality (\ref{ineqlb}) is saturated by $2^{m-1}+1$ vertices, which also generate an $m$-dimensional face, i.e., a facet of the polytope. 

Let us now consider how to apply inequalities (\ref{lcineq}) and (\ref{ineqlb}) to obtain bounds for two-state overlaps of classical states. We start by considering $N$ independent random processes $A_i$, which yield outcomes $v(A_i)$ with probabilities $p[v(A_i)]$. Let $p_{ij}$ denote the probability that the independently drawn values for $A_i$ and $A_j$ are the same, so
\begin{equation*}
p_{ij} \equiv p[v(A_i)=v(A_j)]=\sum_k p[v(A_i)=k]p[v(A_j)=k],
\end{equation*}
where the sum is over all possible outcomes.

Consider an arbitrary, connected graph $G$ with $N$ vertices and $m$ edges. Each vertex represents a random process $A_i$, while edges $\{ i,j\} \in G$ represents a comparison between the outcomes of a pair of neighboring vertices/processes. We assign a logical proposition to each edge $\{i,j\} \in G$: 
\begin{equation}
a_{i,j} := v(A_i)=v(A_j), \forall \{i,j\} \in G.
\end{equation}
Inequality (\ref{lcineq}) then yields:
\begin{equation}
p\left[\bigwedge_{\{i,j\} \in G} v(A_i)=v(A_j) \right] \ge 1-m+\sum_{\{i,j\} \in G} p[v(A_i)=v(A_j)]
\end{equation}

Since $G$ is connected, for any vertex pair $\{k,l\}$ (even those not connected by edges of $G$), it is true that
\begin{equation}
p[v(A_k)=v(A_l)] \ge p\left[\bigwedge_{\{i,j\} \in G} v(A_i)=v(A_j) \right].  
\end{equation}
So
\begin{equation}
p[v(A_k)=v(A_l)] \ge 1-m+\sum_{\{i,j\} \in G} p[v(A_i)=v(A_j)], \forall \{k,l\}. \label{eq:mv}
\end{equation}

We now apply inequality (\ref{eq:mv}) above to obtain inequalities that bound the overlaps of classical states, defined as mixed states which are diagonal in a fixed, reference basis $\{\ket{\phi_i}\}$. As noted in the main text, the two-system overlap $r_{ij}=\tr(\rho_i \sigma_j)$ of classical states is the probability of obtaining the same outcome when measuring the two states in the classical basis:
\begin{align*}
\tr(\rho \sigma) =& \sum_i \bra{\phi_i} \rho \sigma \ket{\phi_i}\\
=&\sum_i \bra{\phi_i} \rho \ket{\phi_i}\bra{\phi_i}\sigma \ket{\phi_i}\\
=& \text{probability that } v(\hat{O})_{\rho}=v(\hat{O})_{\sigma}.
\end{align*}
Classical states can be viewed as a quantum way of parameterizing general independent probabilistic processes. This identification enables us to interpret inequality (\ref{eq:mv}) as an inequality about overlaps $r_{kl}=\tr(\rho_k \rho_l)$ of classical states, leading to
\begin{equation*}
r_{kl} \ge 1-m+\sum_{\{i,j\} \in G} r_{ij},
\end{equation*}
where $\{k,l\}$ are any pairs of vertices in $G$.
The inequality above actually represents many inequalities since, for any pair $\{k, l\}$, we can apply it to any connected subgraph of $G$ that contains these two vertices.
\end{document}